\newtheorem{assumption}{Assumption}
\DeclareMathOperator*{\argmin}{arg\,min}
\DeclareMathOperator*{\argmax}{arg\,max}
\newcommand{\Nmax}{{\bar{N}}}
\xpatchcmd{\proof}{\topsep0\p@\@plus0\p@\relax}{}{}{}
\def\blfootnote{\gdef\@thefnmark{}\@footnotetext}
\pgfplotsset{width=10\columnwidth /10, compat = 1.13, 
	height = 55\columnwidth /100, grid= major, 
	legend cell align = left, ticklabel style = {font=\scriptsize},
	every axis label/.append style={font=\small},
	legend style = {font=\tiny},title style={yshift=-7pt, font = \small} }
\title[The Impact of Data on the  Stability of Learning-Based Control]{The Impact of Data on the  Stability of Learning-Based Control -- Extended Version}
\author{%
 \Name{Armin Lederer} \Email{armin.lederer@tum.de}\\
 \Name{Alexandre Capone} \Email{alexandre.capone@tum.de}\\
  \Name{Thomas Beckers} \Email{t.beckers@tum.de}\\
   \Name{Jonas Umlauft} \Email{jonas.umlauft@tum.de}\\
 \Name{Sandra Hirche} \Email{hirche@tum.de}\\
 \addr Chair of Information-oriented Control \\
 Department of Electrical and Computer Engineering\\
 Technical University of Munich \\
 D-80333 Munich, Germany\thanks{This paper is the extended version of \cite{Lederer2021b}. The official publication can be found at \url{http://proceedings.mlr.press/v144/lederer21a/lederer21a.pdf}.}
}
\begin{document}
	
\setlength{\textfloatsep}{3.0pt}
\setlength{\abovedisplayskip}{4.5pt}
\setlength{\belowdisplayskip}{4.5pt}
\setlength{\topsep}{4.0pt}

\maketitle

\begin{abstract}%
Despite the existence of formal guarantees for learning-based control approaches, the relationship between data and control performance is still poorly understood. In this paper, we propose a Lyapunov-based measure for quantifying the impact of data on the certifiable control performance. By modeling 
unknown system dynamics through Gaussian processes, we can determine the interrelation between model uncertainty and satisfaction of
stability conditions. This allows us to directly asses the impact of data on the provable stationary control performance, and thereby the value of the data
for the closed-loop system performance. Our approach is applicable to a wide variety of unknown nonlinear systems that are to be controlled by 
a generic learning-based control law, and the results obtained in numerical simulations indicate the efficacy of the proposed measure.

\end{abstract}

\begin{keywords}%
data-driven control, Gaussian processes, data-efficient learning, safe learning-based control
\end{keywords}

\section{Introduction}
%\blfootnote{test}

Learning-based control is rapidly becoming an attractive alternative to traditional control approaches, particularly in settings with limited knowledge or prohibitive system complexity \citep{Deisenroth2015,Chua2018}. This has prompted a vast amount of research into the theoretical properties of control approaches based on nonparametric and probabilistic models obtained from supervised
machine learning,  yielding techniques that guarantee either safety or performance requirements \citep{Aswani2013, Berkenkamp2016, Beckers2019,Fisac2019,Capone2019,Gahlawat2020a,Lederer2020c}.

{While theoretical guarantees for control performance can be obtained in various settings, the direct relationship between collected data and control performance in learning-based control with nonparametric, probabilistic models is still poorly understood. In experimental design, the value of data is often quantified using information theoretical quantities, such as mutual information or entropy \citep{Pukelsheim2006}. Although these quantities have been used extensively to guide exploration and control strategies \citep{Hennig2012,Alpcan2015,Koller2018,Capone2020c}, they do not provide direct insight into the impact of data on the provable control performance. 
	In recent years, a handful of efforts has been carried out towards understanding the influence of collected data on control theoretic properties. \cite{Lederer2020a} have proposed a Lyapunov-based measure for quantifying the value of data points with respect to a specific control task based on Gaussian process priors. In a similar vein, \cite{Capone2021} have developed an algorithm to identify the most useful data points for successfully performing multiple control tasks. However, the technique presented in \cite{Lederer2020a} is only applicable to a restricted class of systems, and scalability is a challenge for the method of \cite{Capone2021}.
	
	In this work, we present a Lyapunov-based measure for quantifying the value of training data on the certifiable stationary performance of learning-based control of nonlinear systems. Based on Gaussian process models of unknown system dynamics, the model uncertainty is quantified and a necessary condition on the uncertainty 
	is derived for ensuring stability of unknown closed-loop systems. This condition is transformed into a required training data density, thereby 
	providing a measure for the impact of data on the provable stationary control performance. By considering a markedly richer class of systems,
	this work generalizes the results from \cite{Lederer2020a}.\looseness=-1
	
	The remainder of this paper is structured as follows. In \Cref{sec:ProbStat}, we formally state the considered problem. In \Cref{sec:GPR}, we briefly discuss how Gaussian processes are employed to model the unknown system, and present some preliminary results. Afterwards, in \Cref{sec:info_measure}, we derive the proposed information measure. We then discuss strategies for data selection, in \Cref{sec:data_selection}, after which we present some experimental results, in \Cref{sec:NumEval}. Finally, some concluding remarks are provided in \Cref{sec:Conc}.
	
	\section{Problem Statement}
	\label{sec:ProbStat}
	
	We consider a dynamical system
	\begin{align}
	\dot{\bm{x}}=\bm{g}(\bm{z})= \bm{A} \bm{f}(\bm{z}),
	\label{eq:dyn}
	\end{align}
	where $\bm{z}=\begin{bmatrix}
	\bm{x}^T&\bm{u}^T
	\end{bmatrix}^T\in\mathbb{R}^{d_z}$, $d_z=d_x+d_u$, is the concatenation of the 
	state $\bm{x}\in\mathbb{X}$ and the input 
	$\bm{u}\in\mathbb{U}$ for compact sets $\mathbb{X}\subset\mathbb{R}^{d_x}$ and $\mathbb{U}\subset\mathbb{R}^{d_u}$. We assume to
	know the matrix $\bm{A}\in\mathbb{R}^{d_x\times d_f}$, whereas the function
	$\bm{f}:\mathbb{R}^{d_x+d_u}\rightarrow \mathbb{R}^{d_f}$ is unknown. 
	This system 
	structure is very flexible
	and allows general multi-dimensional nonlinear systems ($\bm{A}=\bm{I}_{d_x}$, $d_f=d_x$), 
	as well as correlation in the outputs, 
	which can be found, e.g., in Euler-Lagrange systems due to the symmetry of the mass matrix \citep{Cheng2015}.
	We assume to have an approximate model
	$\hat{\bm{f}}:\mathbb{R}^{d_x+d_u}\rightarrow\mathbb{R}^{{d_f}}$ of the unknown function 
	$\bm{f}(\cdot,\cdot)$, which is often available in practice, as well as measurement data. This 
	yields the following formal assumption, which is discussed in detail in \citet{Umlauft2020}.\looseness=-1
	\begin{assumption}
		\label{as:traindat}
		A data set containing $N$ measurement pairs
		\begin{align}
		\mathbb{D}_N=\left\{ \bm{z}^{(n)}\coloneqq \begin{bmatrix}
		\bm{x}^{(n)}\\\bm{u}^{(n)}
		\end{bmatrix},\bm{y}^{(n)}=\bm{g}\left(\bm{z}^{(n)}\right)+\bm{\epsilon}^{(n)} \right\}_{n=1}^N
		\end{align}
		is available, where $\bm{\epsilon}^{(n)}\sim\mathcal{N}(0,\bm{\Sigma}_{\mathrm{on}})$ is i.i.d. Gaussian noise with 
		covariance matrix $\bm{\Sigma}_{\mathrm{on}}$.
	\end{assumption}
	Additionally, we assume that the unknown functions $f_i(\cdot)$ are well behaved, as 
	expressed in the following.
	\begin{assumption}
		\label{as:Lip}
		The unknown functions $f_i(\cdot)$ are Lipschitz continuous with Lipschitz constants $L_{f_i}$, 
		$i=1,\ldots,{d_f}$.
	\end{assumption}
	We assume to have a learning-based control law $\bm{\pi}:\mathbb{X}\times\left(\mathbb{X}\times\mathbb{U}\times\mathbb{R}^{d_x} \right)^N\times \mathbb{R}_{0,+}\rightarrow \mathbb{U}$
	that causally maps a state $\bm{x}$ to a control 
	input $\bm{u}$ depending on the previously observed training data $\mathbb{D}_N$ and the 
	current time $t\in\mathbb{R}_{0,+}$. The control law 
	is designed to achieve a control task, e.g., stabilization with respect to a reference point, or 
	tracking of a reference trajectory. The effectiveness of the control law with respect to this 
	task is measured via a 
	Lyapunov function\footnote{A Lyapunov function $V:\mathbb{X}\times\mathbb{R}_{0,+}\rightarrow\mathbb{R}_{0,+}$ is 
		positive definite, i.e., $V(\bm{x},t)\geq 0$ with equality if and only if $\bm{x}=0$.}
	$V:\mathbb{X}\times\mathbb{R}_{0,+}\rightarrow\mathbb{R}_{0,+}$ and its temporal derivative 
	$\dot{V}(\cdot,\cdot)$ along trajectories of the closed loop system defined through 
	\begin{align}
	\tilde{\bm{g}}(\bm{x})=\bm{g}\left( \begin{bmatrix}
	\bm{x}\\\bm{\pi}(\bm{x},\bm{x}^{(1)},\bm{u}^{(1)},\ldots,t)
	\end{bmatrix} \right),
	\label{eq:cl-sys}
	\end{align}
	which only depends on $\bm{x}$, as the control inputs are specified by the policy $\bm{\pi}(\cdot,\bm{x}^{(1)},\bm{u}^{(1)},\ldots,t)$.
	Since we do not know the function $\bm{f}(\cdot)$ but only have a training data set $\mathbb{D}_N$ 
	and an approximate model $\hat{\bm{f}}(\cdot)$, 
	the warranted control performance, measured through the size of the region in which $\dot{V}(\bm{x})\leq0$ is not guaranteed,
	strongly depends on the training data. In such a setting, it is crucial to understand the interrelation between training
	data and control performance, e.g., for determining where additional training data should be acquired to increase control performance, or when a subset of the training data must be selected to 
	reduce the computational complexity of learning. Therefore, we consider the problem of 
	determining the impact of data on stability certificates for learning-based control.
	
	%%%%%%%%%%%%%%%%%%%%%%%%%%%%%%%%%%%%%%%%%%%%%%%%%%%%%%%%%%%%%%%%%%%%%%%%%%%%%%%%
	%%%%%%%%%%%%%%%%%%%%%%%%%%%%%%%%%%%%%%%%%%%%%%%%%%%%%%%%%%%%%%%%%%%%%%%%%%%%%%%%
	\section{Gaussian Process Regression}
	\label{sec:GPR}
	%%%%%%%%%%%%%%%%%%%%%%%%%%%%%%%%%%%%%%%%%%%%%%%%%%%%%%%%%%%%%%%%%%%%%%%%%%%%%%%%
	%%%%%%%%%%%%%%%%%%%%%%%%%%%%%%%%%%%%%%%%%%%%%%%%%%%%%%%%%%%%%%%%%%%%%%%%%%%%%%%%
	For determining the impact of data on stability certificates, we employ a Gaussian process (GP) formulation, 
	such that we can relate the control performance to the model uncertainty. We first introduce 
	the foundations of Gaussian process regression in \cref{subsec:GP}, before we explain 
	the extension to multiple outputs, in \cref{subsec:MOGP}. 
	Finally, we propose an output-decoupling formulation using linear models of coregionalization and  
	propose a novel uniform error bound, in \cref{subsec:output decoup}.
	
	%%%%%%%%%%%%%%%%%%%%%%%%%%%%%%%%%%%%%%%%%%%%%%%%%%%%%%%%%%%%%%%%%%%%%%%%%%%%%%%%
	\subsection{Single Output Gaussian Processes}
	\label{subsec:GP}
	
	A Gaussian process $\mathcal{GP}(\hat{f}(\cdot,),k(\cdot,\cdot))$, uniquely defined through a 
	prior mean function $\hat{f}:\mathbb{R}^{d_z}\rightarrow\mathbb{R}$ and a covariance 
	function $k:\mathbb{R}^{d_z}\times\mathbb{R}^{d_z}\rightarrow\mathbb{R}_{0,+}$, is a 
	generalization of Gaussian distributions \citep{Rasmussen2006}. The prior mean function 
	$\hat{f}(\cdot)$ is often used to include approximate models in the regression, whereas the 
	covariance function describes prior assumptions on properties such as smoothness or 
	periodicity. A commonly used covariance function, on which we also focus in the following 
	analysis for clarity of exposition, 
	is the squared exponential kernel \looseness=-1 
	\begin{align}
	\label{eq:SE kernel}
	k(\bm{z},\bm{z}')=s_f^2\exp\left( -\frac{1}{2}(\bm{z}-\bm{z}')^T\bm{\Lambda}^{-1}(\bm{z}-\bm{z}')\right),
	%	\sum\limits_{i=1}^{d_z}\frac{(z_i-z_i')^2}{2l_i^2}\right),
	\end{align}
	where $s_f^2\!\in\!\mathbb{R}_{0,+}$ and $\bm{\Lambda}\!\in\!\mathbb{R}^{d_z\times d_z}$, $\bm{\Lambda}\succ \bm{0}$, denote the signal variance and length
	scales, respectively. 
	
	In order to perform regression with the GP, we consider a scalar system \eqref{eq:dyn} with 
	${d_f}=1$. Then, the joint prior distribution of training targets $\bm{t}=\begin{bmatrix}
	y^{(1)}&\cdots&y^{(N)}
	\end{bmatrix}^T$ and the unknown function value $f(\bm{z})$ for input $\bm{z}$ 
	is given by
	\begin{align}
	\begin{bmatrix}
	\bm{t}\\f(\bm{z})
	\end{bmatrix}\sim\mathcal{N}\left( \begin{bmatrix}
	\hat{f}(\bm{Z})\\\hat{f}(\bm{z})
	\end{bmatrix},\begin{bmatrix}
	k(\bm{Z},\bm{Z})+\sigma_{\mathrm{on}}^2\bm{I}_N&k(\bm{Z},\bm{z})\\
	k^T(\bm{Z},\bm{z})&k(\bm{z},\bm{z})
	\end{bmatrix} \right),
	\end{align}
	where we use the abbreviations $\hat{f}(\bm{Z})\!\in\!\mathbb{R}^{N}$, $k(\bm{Z},\bm{Z})\!\in\!\mathbb{R}^{N\times N}$ and $k(\bm{Z},\bm{Z})\!\in\!\mathbb{R}^{N}$ 
	with elements defined as $\hat{f}_n(\bm{Z})\!=\!\hat{f}(\bm{z}^{(n)})$, $k_{n,n'}(\bm{Z},\bm{Z})\!=\!
	k(\bm{z}^{(n)},\bm{z}^{(n')})$ and $k_n(\bm{Z},\bm{z})\!=\!k(\bm{z}^{(n)},\bm{z})$, $n,n'\!=\!1,\ldots,N$, 
	respectively. By conditioning the GP on the training data, we obtain the posterior distribution\looseness=-1
	\begin{align}
	f(\bm{z})|y^{(1)},\ldots,y^{(N)},\bm{z}^{(1)},\ldots,\bm{z}^{(N)},\bm{z}\sim
	\mathcal{N}(\mu(\bm{z}),\sigma^2(\bm{z}))
	\label{eq:so GP dist}
	\end{align}
	with posterior mean and variance
	\begin{align}
	\mu(\bm{z})&=\hat{f}(\bm{z})+k^T(\bm{Z},\bm{z})\left(k(\bm{Z},\bm{Z})+\sigma_{\mathrm{on}}^2\bm{I}_N\right)^{-1}(\bm{t}-\hat{f}(\bm{Z}))\\
	\sigma^2(\bm{z})&=k(\bm{z},\bm{z})-k^T(\bm{Z},\bm{z})\left(k(\bm{Z},\bm{Z})+\sigma_{\mathrm{on}}^2\bm{I}_N\right)^{-1}k(\bm{Z},\bm{z}).
	\end{align}
	
	%%%%%%%%%%%%%%%%%%%%%%%%%%%%%%%%%%%%%%%%%%%%%%%%%%%%%%%%%%%%%%%%%%%%%%%%%%%%%%%%
	\subsection{Multiple-Output Gaussian Process Regression}
	\label{subsec:MOGP}
	
	In order to apply Gaussian processes to multiple-output regression problems, we can 
	proceed analogously to the single output case. For illustrative purposes, we assume for 
	now that $\bm{A}=\bm{I}_{d_x}$ and ${d_f}=d_x$ in \eqref{eq:dyn}, such that we have noisy 
	measurements of the functions $f_i(\cdot)$, $i=1,\ldots,{d_f}$ in the data set $\mathbb{D}_N$. 
	We start again with the prior GP distribution
	\begin{align}
	\label{eq:fprior}
	\bm{f}(\cdot)\sim\mathcal{GP}\left(\hat{\bm{f}}(\cdot),\bm{K}(\cdot,\cdot)\right),
	\end{align}
	where we have to consider a vector-valued prior mean function $\hat{\bm{f}}:\mathbb{R}^{d_z}\rightarrow\mathbb{R}^{d_f}$ 
	and a matrix kernel function $\bm{K}:\mathbb{R}^{d_z}\times\mathbb{R}^{d_z}\rightarrow\mathbb{R}^{{d_f}\times {d_f}}_{0,+}$, 
	in which each element $k_{m,m'}:\mathbb{R}^{d_z}\times\mathbb{R}^{d_z}\rightarrow\mathbb{R}_{0,+}$ 
	is a kernel. 
	By concatenating the training targets $\bm{y}_i^{(n)}$ in the vector 
	$\bm{t}^T=[
	y_1^{(1)}\ \cdots\ y_1^{(N)}\ y_2^{(1)}\ \cdots\ y_{d_x}^{(N)}]$
	and conditioning the joint distribution of $\bm{t}$ and $\bm{f}(\bm{z})$ on the training data 
	$\mathbb{D}_N$, analogously to \eqref{eq:so GP dist} we obtain a multivariate Gaussian distribution 
	with mean and covariance matrix
	\begin{align}
	\bm{\mu}(\bm{z})&=\hat{\bm{f}}(\bm{z})+\bm{K}^T(\bm{Z},\bm{z})  \left(\bm{K}(\bm{Z},\bm{Z})+\bm{\Sigma}_{\mathrm{on}}\otimes\bm{I}_N \right)^{-1}\left(\bm{t}-\hat{\bm{f}}(\bm{Z})\right)\\
	\bm{\Sigma}(\bm{z})&=\bm{K}(\bm{z},\bm{z})-\bm{K}^T(\bm{Z},\bm{z})  \left(\bm{K}(\bm{Z},\bm{Z})+\bm{\Sigma}_{\mathrm{on}}\otimes\bm{I}_N \right)^{-1}\bm{K}(\bm{Z},\bm{z}),
	\end{align}
	where we extend the shorthand notation from \cref{subsec:GP} using 
	\begin{align}
	\label{eq:Cmat}
	\bm{K}(\bm{Z},\bm{Z})&
	\begin{aligned}
	=\!\begin{bmatrix}
	\!k_{1,1}(\bm{Z},\bm{Z})\!\!&\!\!\cdots\!\!&\!\!k_{1,{d_f}}(\bm{Z},\bm{Z})\!\\
	\vdots\!\!&\!\!\ddots\!\!&\!\!\vdots\\
	\!k_{{d_f},1}(\bm{Z},\bm{Z})\!\!&\!\!\cdots\!\!&\!\!k_{{d_f},{d_f}}(\bm{Z},\bm{Z})\!
	\end{bmatrix}&&
	\bm{K}(\bm{Z},\bm{z})=\!\begin{bmatrix}
	\!k_{1,1}(\bm{Z},\bm{z})\!\!&\!\!\cdots\!\!&\!\!k_{1,{d_f}}(\bm{Z},\bm{z})\!\\
	\vdots\!\!&\!\!\ddots\!\!&\!\!\vdots\\
	\!k_{{d_f},1}(\bm{Z},\bm{z})\!\!&\!\!\cdots\!\!&\!\!k_{{d_f},{d_f}}(\bm{Z},\bm{z})\!
	\end{bmatrix}
	\end{aligned}\nonumber\\
	\hat{\bm{f}}(\bm{Z})&=\!\begin{bmatrix}
	\hat{f}_1(\bm{Z})&\cdots&\hat{f}_{d_f}(\bm{Z})
	\end{bmatrix}^T.
	\end{align}
	
	%%%%%%%%%%%%%%%%%%%%%%%%%%%%%%%%%%%%%%%%%%%%%%%%%%%%%%%%%%%%%%%%%%%%%%%%%%%%%%%%
	\subsection{Output Decoupling through Linear Models of Coregionalization}
	\label{subsec:output decoup}
	
	While various positive definite kernels are known for scalar regression, positive 
	definiteness is  
	a major challenge in the multiple-output approach presented in   
	\cref{subsec:MOGP} since it is not sufficient that each entry of $\bm{K}(\cdot,\cdot)$  
	is a covariance function. However, in the following we show that knowledge of the output 
	correlation structure in the form of a matrix $\bm{A}$ allows to define proper  
	kernel matrix functions via scalar covariance functions $k_i(\cdot,\cdot)$.  
	For this, we require the following assumption.\looseness=-1 
	\begin{assumption}
		\label{as:GPsample}
		Prior knowledge about the functions $f_i(\cdot)$ is expressed through 
		independent prior GP distributions with scalar kernels $k_i(\cdot,\cdot)$, $i=1,\ldots,{d_f}$, i.e., 
		\begin{align}
		\bm{f}(\cdot)\!\sim\!\mathcal{GP}\left(\hat{\bm{f}}(\cdot),\mathrm{diag}\left(\begin{bmatrix}
		k_1(\cdot,\cdot)&\cdots&k_{d_f}(\cdot,\cdot)
		\end{bmatrix}\right)\right).
		\end{align}
	\end{assumption}
	This assumption is not restrictive, since correlation in the training targets can be modeled through the 
	matrix $\bm{A}$, and it is frequently used in the case where $\bm{A}=\bm{I}_{d_x}$ holds \citep{Berkenkamp2015,Koller2018,Beckers2019, Hewing2020a}.
	
	Due to \cref{as:GPsample}, it directly follows that
	\begin{align}
	\bm{g}(\cdot)\sim\mathcal{GP}\left(\bm{A}\hat{\bm{f}}(\cdot),\bm{A}\mathrm{diag}\left(\begin{bmatrix}
	k_1(\cdot,\cdot)&\cdots&k_{d_f}(\cdot,\cdot)
	\end{bmatrix}\right)\bm{A}^T\right),
	\end{align}
	such that we can intuitively define a kernel matrix function through
	\begin{align}
	\bm{K}(\bm{z},\bm{z}')=\bm{A}\mathrm{diag}\left(\begin{bmatrix}
	k_1(\bm{z},\bm{z}')&\cdots&k_{d_f}(\bm{z},\bm{z}')
	\end{bmatrix}\right)\bm{A}^T.
	\end{align}
	It is trivial to show that this kernel parameterization is a special case of a linear model of 
	coregionalization \citep{Alvarez2011}, such that we can immediately extend the approach 
	in \cite{Duvenaud2014} to recover models for the individual functions $f_i(\cdot)$, as shown 
	in the following lemma\footnote{Proofs for all theoretical results can be found in the appendix.}.
	\begin{lemma}
		\label{lem1}
		Consider a nonlinear system \eqref{eq:dyn} with matrix $\bm{A}=\begin{bmatrix}
		\bm{a}_1&\cdots&\bm{a}_{d_f}
		\end{bmatrix}$ composed of column vectors $\bm{a}_i$, for which a training data set $\mathbb{D}_N$ 
		and prior distributions that satisfy 
		Assumptions~\ref{as:traindat} and \ref{as:GPsample}, respectively, are given. 
		Then, the posterior distributions are given by
		\begin{align}
		f_i(\bm{z})|\mathbb{D}\sim\mathcal{N}(\mu_i(\bm{z}),\sigma_i(\bm{z})),
		\end{align}
		where
		\begin{align}
		\label{eq:muf}
		\mu_i(\bm{z})&=\hat{f}_i(\bm{z})+\left(k_i^T(\bm{Z},\bm{z})\otimes\bm{a}_i^T\right)\left(\bm{K}(\bm{Z},\bm{Z})+\bm{\Sigma}_{\mathrm{on}}\otimes\bm{I}_N \right)^{-1}
		\left(\bm{t}-\hat{\bm{f}}(\bm{Z})\right)\\
		\sigma_i^2(\bm{z})&=k_i(\bm{z},\bm{z})
		-\left(k_i^T(\bm{Z},\bm{z})\otimes\bm{a}_i^T\right)\left(\bm{K}(\bm{Z},\bm{Z})+\bm{\Sigma}_{\mathrm{on}}\otimes\bm{I}_N \right)^{-1}\left( k_i(\bm{Z},\bm{z})\otimes\bm{a}_i\right).
		\label{eq:sigmaf}
		\end{align}
	\end{lemma}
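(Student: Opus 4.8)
The plan is to obtain the posterior of the single latent component $f_i(\bm{z})$ directly from the standard Gaussian conditioning formula, treating the stacked training targets $\bm{t}$ as a noisy linear observation of the joint process $\bm{f}$. The key observation is that, even though $\mathbb{D}_N$ only contains measurements of the \emph{mixed} output $\bm{g}=\bm{A}\bm{f}$, the pair $(\bm{t},f_i(\bm{z}))$ is still jointly Gaussian: by \cref{as:GPsample} the vector $(\bm{f}(\bm{z}^{(1)}),\ldots,\bm{f}(\bm{z}^{(N)}),f_i(\bm{z}))$ is jointly Gaussian, each $\bm{y}^{(n)}=\bm{A}\bm{f}(\bm{z}^{(n)})+\bm{\epsilon}^{(n)}$ is an affine map of these values plus the independent Gaussian noise of \cref{as:traindat}, and affine transformations preserve joint Gaussianity. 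Hence $f_i(\bm{z})\mid\mathbb{D}$ is Gaussian and it suffices to identify the relevant mean vectors and covariance blocks.

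First I would record the three ingredients of the conditioning identity. The prior mean of $f_i(\bm{z})$ is $\hat{f}_i(\bm{z})$ and its prior variance is $k_i(\bm{z},\bm{z})$, both read off from the diagonal prior of \cref{as:GPsample}. The marginal law of $\bm{t}$ has mean $\hat{\bm{f}}(\bm{Z})$ and covariance $\bm{K}(\bm{Z},\bm{Z})+\bm{\Sigma}_{\mathrm{on}}\otimes\bm{I}_N$, where $\bm{K}(\bm{Z},\bm{Z})$ is exactly the block matrix \eqref{eq:Cmat} of the multiple-output construction specialised to the coregionalization kernel $\bm{K}=\bm{A}\,\mathrm{diag}(k_1,\ldots,k_{d_f})\bm{A}^T$, i.e.\ the block $k_{m,m'}$ collapses to $\sum_{j}A_{mj}A_{m'j}k_j$.

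The crux is the cross-covariance between $f_i(\bm{z})$ and $\bm{t}$, and here the independence of the components $f_j$ under \cref{as:GPsample} does the real work: for output $m$ at sample $n$ one computes
\begin{align}
  \mathrm{Cov}\big(f_i(\bm{z}),\,g_m(\bm{z}^{(n)})\big)
  =\sum_{j=1}^{d_f}A_{mj}\,\mathrm{Cov}\big(f_i(\bm{z}),\,f_j(\bm{z}^{(n)})\big)
  =A_{mi}\,k_i(\bm{z},\bm{z}^{(n)}),
\end{align}
since every cross term with $j\neq i$ vanishes. Collecting these entries over all outputs and samples and matching the stacking order of $\bm{t}$ assembles the cross-covariance into the Kronecker vector $k_i(\bm{Z},\bm{z})\otimes\bm{a}_i$ appearing (transposed) in \eqref{eq:muf}--\eqref{eq:sigmaf}; only the $i$-th column $\bm{a}_i$ of $\bm{A}$ and the $i$-th scalar kernel $k_i$ survive, which is precisely why the mixing matrix and the full matrix kernel collapse to $\bm{a}_i$ and $k_i$.

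Substituting these blocks into $X\mid Y\sim\mathcal{N}(\mu_X+\Sigma_{XY}\Sigma_{YY}^{-1}(Y-\mu_Y),\,\Sigma_{XX}-\Sigma_{XY}\Sigma_{YY}^{-1}\Sigma_{YX})$ immediately reproduces the stated $\mu_i(\bm{z})$ and $\sigma_i^2(\bm{z})$. I expect the main obstacle to be purely bookkeeping rather than conceptual: one must keep the Kronecker ordering consistent across $\bm{t}$, the noise term $\bm{\Sigma}_{\mathrm{on}}\otimes\bm{I}_N$, the kernel blocks of $\bm{K}(\bm{Z},\bm{Z})$, and the cross-covariance, so that the factor $\bm{a}_i$ lands in the correct slot of the product. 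Once this convention is fixed, no nontrivial algebra remains; in particular the argument never inverts $\bm{A}$ but merely selects its $i$-th column, so the result holds for arbitrary (possibly non-square, non-invertible) $\bm{A}$.
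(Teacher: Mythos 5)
Your proof is correct and arrives at the stated formulas, but it takes a more elementary, self-contained route than the paper. The paper's proof first rewrites the matrix kernel as a linear model of coregionalization, $\bm{K}(\bm{z},\bm{z}')=\sum_{i=1}^{d_f}\bm{a}_i\bm{a}_i^Tk_i(\bm{z},\bm{z}')$, then invokes the additive-component posterior of \citet{Duvenaud2014} to obtain the full $d_x\times d_x$ posterior mean and covariance of the vector-valued quantity $\bm{a}_if_i(\cdot)$, and finally factors $\bm{a}_i$ out of both expressions to read off \eqref{eq:muf}--\eqref{eq:sigmaf}. You instead condition the scalar $f_i(\bm{z})$ on $\bm{t}$ directly, which obliges you to compute the cross-covariance $\mathrm{Cov}\bigl(f_i(\bm{z}),g_m(\bm{z}^{(n)})\bigr)=a_{m,i}\,k_i(\bm{z},\bm{z}^{(n)})$ from the independence of the components in \cref{as:GPsample}; that computation is precisely the content hidden inside the cited coregionalization result, so your version trades the citation for explicit covariance bookkeeping and never needs the intermediate object $\bm{a}_if_i(\cdot)$. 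Both arguments are ultimately the same Gaussian conditioning identity, and your observation that $\bm{A}$ is never inverted (only its $i$-th column is selected) is the right reason the lemma holds for non-square $\bm{A}$. Two minor remarks: the Kronecker-ordering issue you flag is real but is already present in the paper itself (the appendix switches between $\bm{B}_i\otimes k_i^T(\bm{Z},\bm{z})$ and the $k_i^T(\bm{Z},\bm{z})\otimes\bm{a}_i^T$ of the lemma statement), so consistency with the block structure of $\bm{\Sigma}_{\mathrm{on}}\otimes\bm{I}_N$ and \eqref{eq:Cmat} is all that is required; and the prior mean of $\bm{t}$ is strictly the stacked values of $\bm{A}\hat{\bm{f}}(\bm{z}^{(n)})$ rather than $\hat{\bm{f}}(\bm{Z})$ --- your write-up inherits the same notational shortcut that appears in \eqref{eq:muf} itself, so this is not a gap relative to the paper.
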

	A crucial benefit of this decoupling of the outputs is that it allows the 
	application of scalar analysis methods to uniformly bound the regression error on the input domain $\mathbb{X}\times\mathbb{U}$
	as 
	proposed in \cite{Lederer2019}. This is formalized in the following theorem.
	\begin{theorem}
		\label{thm2}
		Consider a nonlinear system \eqref{eq:dyn}, a training data set $\mathbb{D}_N$, 
		and prior distributions satisfying Assumptions~\ref{as:traindat}-\ref{as:GPsample}, respectively. 
		For any $\delta\in(0,1)$, $\tau\in\mathbb{R}_+$, and $i=1,\ldots,{d_f}$, it holds that\looseness=-1
		\begin{align}
		\label{eq:unibound}
		P\left(|f_i(\bm{z})-\mu_i(\bm{z})|\leq \sqrt{\beta(\delta,\tau)}\sigma_i(\bm{z})+\gamma_i(\delta,\tau)\quad \forall \bm{z}\in\mathbb{X}\times\mathbb{U}\right)\geq 1-\delta,
		\end{align}
		where 
		\begin{align}
		\beta(\delta,\tau)=2d_x\log\left(1+\frac{r_0}{\tau}\right)-\log(\delta),
		\qquad \gamma_i(\delta,\tau)=(L_{\mu_i}+L_{f_i})\tau+\sqrt{\beta(\delta,\tau)L_{\sigma_i^2}\tau}.
		\end{align}
		Here, $L_{\mu_i}$ and $L_{\sigma_i^2}$ are the Lipschitz constants of the mean 
		and variance functions, respectively, and $r_0=\max_{\bm{z},\bm{z}'\in\mathbb{X}\times\mathbb{U}}\|\bm{z}-\bm{z}'\|$.
	\end{theorem}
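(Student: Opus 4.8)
The plan is to exploit \Cref{lem1} to reduce the problem to a collection of scalar Gaussian process regression problems, one per output index $i$, and then apply the scalar uniform error bound technique of \cite{Lederer2019}. Indeed, \Cref{lem1} shows that, conditioned on $\mathbb{D}_N$, each $f_i(\bm{z})$ is a scalar Gaussian random variable with mean $\mu_i(\bm{z})$ and variance $\sigma_i^2(\bm{z})$. Hence the normalized error $(f_i(\bm{z})-\mu_i(\bm{z}))/\sigma_i(\bm{z})$ is standard normal for each fixed $\bm{z}$, and a standard Gaussian tail bound yields the pointwise estimate
\begin{align}
P\left(|f_i(\bm{z})-\mu_i(\bm{z})|> \sqrt{\beta}\,\sigma_i(\bm{z})\right)\leq e^{-\beta/2}
\end{align}
for any fixed $\bm{z}$ and any $\beta>0$. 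The remaining work is to upgrade this pointwise statement into one that holds uniformly over the compact set $\mathbb{X}\times\mathbb{U}$.

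First I would discretize the domain: since $\mathbb{X}\times\mathbb{U}$ is compact with diameter $r_0$, it admits a finite grid $\{\bm{z}_j\}_{j=1}^{N_\tau}$ such that every $\bm{z}\in\mathbb{X}\times\mathbb{U}$ lies within distance $\tau$ of some grid point, where the covering number $N_\tau$ is bounded by a polynomial in $1/\tau$ of the form $(1+r_0/\tau)^{d}$ with $d$ proportional to the input dimension. Applying the pointwise bound at each grid point together with a union bound gives a failure probability of at most $N_\tau e^{-\beta/2}$ for the event that the bound is violated at any $\bm{z}_j$. Setting this equal to $\delta$ and solving for $\beta$ produces exactly the form of $\beta(\tau)$ reported in the statement, with its logarithmic dependence on $1+r_0/\tau$ and on $\delta$; the precise constants follow from the explicit covering-number bound.

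The final and most delicate step is to pass from the grid to the full continuum using Lipschitz continuity. For arbitrary $\bm{z}$, let $\bm{z}_j$ be a nearest grid point, so $\|\bm{z}-\bm{z}_j\|\leq\tau$, and insert it via the triangle inequality,
\begin{align}
|f_i(\bm{z})-\mu_i(\bm{z})|\leq |f_i(\bm{z})-f_i(\bm{z}_j)|+|f_i(\bm{z}_j)-\mu_i(\bm{z}_j)|+|\mu_i(\bm{z}_j)-\mu_i(\bm{z})|.
\end{align}
The first term is bounded by $L_{f_i}\tau$ via \Cref{as:Lip}, the third by $L_{\mu_i}\tau$ via Lipschitz continuity of the posterior mean, and the middle term by $\sqrt{\beta}\,\sigma_i(\bm{z}_j)$ on the high-probability grid event. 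The key manipulation is to re-express $\sigma_i(\bm{z}_j)$ in terms of $\sigma_i(\bm{z})$: since the variance is Lipschitz with constant $L_{\sigma_i^2}$, one has $\sigma_i^2(\bm{z}_j)\leq \sigma_i^2(\bm{z})+L_{\sigma_i^2}\tau$, and subadditivity of the square root gives $\sigma_i(\bm{z}_j)\leq \sigma_i(\bm{z})+\sqrt{L_{\sigma_i^2}\tau}$. Collecting the three contributions reproduces precisely $\sqrt{\beta}\,\sigma_i(\bm{z})+\gamma_i(\tau)$ with $\gamma_i(\tau)=(L_{\mu_i}+L_{f_i})\tau+\sqrt{\beta(\tau)L_{\sigma_i^2}\tau}$.

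I expect the main obstacle to lie in the Lipschitz-extension step rather than in the probabilistic argument. Two points require care: first, the Lipschitz constants $L_{\mu_i}$ and $L_{\sigma_i^2}$ must be guaranteed finite, which follows from the smoothness of the squared exponential kernel on the compact domain but must be made explicit if one wants computable constants; and second, one must correctly interleave the probabilistic event, which holds simultaneously at all grid points with probability at least $1-\delta$, with the deterministic Lipschitz bounds, which hold everywhere, so that the final uniform statement is valid on the same probability-$(1-\delta)$ event. The conversion of the variance Lipschitz bound into a bound on the standard deviation through square-root subadditivity is the one genuinely non-routine step, and it is what fixes the form of the $\sqrt{\beta(\tau)L_{\sigma_i^2}\tau}$ term in $\gamma_i(\tau)$.
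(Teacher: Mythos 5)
Your proof follows essentially the same route as the paper, whose own proof of \cref{thm2} consists of invoking \cref{lem1} to reduce to scalar posteriors and then adapting Theorem~3.1 of \citet{Lederer2019}; your reconstruction of that argument---pointwise Gaussian tail bound, union bound over a $\tau$-covering of the compact domain, and Lipschitz extension to the continuum with the square-root subadditivity step $\sigma_i(\bm{z}_j)\leq\sigma_i(\bm{z})+\sqrt{L_{\sigma_i^2}\tau}$---is exactly the intended proof. The only mismatch is cosmetic: your union bound yields $-2\log\delta$ rather than the $-\log\delta$ printed in the theorem (and the covering exponent should arguably involve $d_z=d_x+d_u$ rather than $d_x$), but both discrepancies trace to the paper's stated constants rather than to any flaw in your argument.
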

	This theorem is a generalization of \citet[Lemma 1]{Lederer2020a} 
	and many properties directly transfer. Small error bounds can be achieved through small 
	GP standard deviations $\sigma_i(\bm{z})$, which corresponds to high data densities.
	This resembles well-known relationships from scattered data approximation \citep{Wendland2005} 
	and Bayesian optimization \citep{Srinivas2012}. 
	The dependence of the uniform error bound 
	\eqref{eq:unibound} on the constants $\gamma_i(\delta,\tau)$ does not affect this behavior, since
	they can be chosen arbitrarily small, and convergence to $0$ can be shown under weak 
	assumptions on $\sigma_i(\bm{z})$ \citep{Lederer2019}.
	In general, the constant
	$\tau$ trades-off the effect of the data independent terms $\gamma_i(\delta,\tau)$ 
	and the posterior standard deviations $\sigma_i(\bm{z})$ on the error bound. Therefore, $\tau$ should be chosen such that  the uncertainty dependence of the bound dominates, i.e., $\sqrt{\beta(\delta,\tau)}\sigma_i(\bm{z})\!\gg\! \gamma_i(\delta,\tau)$. \looseness=-1
	%\pagebreak
	\begin{remark}
		\label{rem1}
		\cref{thm2} admits the counterintuitive behavior that adding training samples can lead 
		to a locally higher uniform error bound. This is due to the fact that 
		adding data in some regions can increase the Lipschitz constants of 
		$\mu(\cdot)$ and $\sigma(\cdot)$ \citep{Lederer2019}, and thereby increase the uniform error bound in other 
		regions. Note that a similar argument holds for uniform
		error bounds based on RKHS theory \citep{Srinivas2012, Chowdhury2017a}.
	\end{remark}

	%%%%%%%%%%%%%%%%%%%%%%%%%%%%%%%%%%%%%%%%%%%%%%%%%%%%%%%%%%%%%%%%%%%%%%%%%%%%%%%%
	%%%%%%%%%%%%%%%%%%%%%%%%%%%%%%%%%%%%%%%%%%%%%%%%%%%%%%%%%%%%%%%%%%%%%%%%%%%%%%%%
	\section{Control-Based Information Measures}
	\label{sec:info_measure}
	%%%%%%%%%%%%%%%%%%%%%%%%%%%%%%%%%%%%%%%%%%%%%%%%%%%%%%%%%%%%%%%%%%%%%%%%%%%%%%%%
	%%%%%%%%%%%%%%%%%%%%%%%%%%%%%%%%%%%%%%%%%%%%%%%%%%%%%%%%%%%%%%%%%%%%%%%%%%%%%%%%
	
	While the uniform error bound in \cref{thm2} establishes a connection between the training 
	data distribution, represented by the posterior GP variance, and the regression performance,
	it is ignorant of the control task. In order to measure the importance of data for control
	performance, we consider the Lyapunov stability conditions \citep{Khalil2002} for the closed 
	loop system, which require a negative derivative of the Lyapunov function $V(\cdot,\cdot)$,
	i.e., 
	\begin{align}
	\dot{V}(\bm{x},t)=(\nabla_{\bm{x}} V(\bm{x},t))^T\bm{A}\tilde{\bm{f}}(\bm{x})+\frac{\partial}{\partial t}V(\bm{x},t),
	\end{align}
	where we employ the shorthand notation $\tilde{\bm{f}}(\bm{x})=\bm{f}\big( \begin{bmatrix} \bm{x}^T&\bm{\pi}^T(\bm{x},\bm{x}^{(1)},\bm{u}^{(1)},\ldots,t)	\end{bmatrix}^T \big) $, which is used analogously for the GP mean 
	$\tilde{\bm{\mu}}(\cdot)$ and variance $\tilde{\bm{\sigma}}^2(\cdot)$. 
	Although the function $\tilde{\bm{f}}(\cdot)$ is unknown, we can bound the Lyapunov function derivative 
	based on the uniform error bound \eqref{eq:unibound}, which yields
	\begin{align}
	\label{eq:pess_stab}
	\dot{V}(\bm{x},t)&
	\leq \dot{V}_{\mathrm{nom}}(\bm{x},t)+\dot{V}_{\bm{\sigma}}(\bm{x},t),
	\end{align}
	where we decouple the bound into the nominal component based on the GP mean 
	\begin{align}
	\label{eq:nom Lyap}
	\dot{V}_{\mathrm{nom}}(\bm{x},t)&=\!(\nabla V(\bm{x},t))^T\bm{A}\tilde{\bm{\mu}}(\bm{x})\!+\!\frac{\partial}{\partial t}V(\bm{x},t),
	\end{align}
	and an uncertain component depending on the GP standard deviation
	\begin{align}\label{eq:uncert Lyap}
	\dot{V}_{\bm{\sigma}}(\bm{x},t)&=\!\begin{bmatrix}
	\left|\left(\nabla_{\bm{x}} V(\bm{x},t)\right)^T\!\bm{a}_1\right|\!&\!\cdots\!&\!\left|\left(\nabla_{\bm{x}} V(\bm{x},t)\right)^T\!\bm{a}_{d_f}\right|
	\end{bmatrix}\left(\sqrt{\beta(\delta,\tau)}\tilde{\bm{\sigma}}(\bm{x})\!+\!\bm{\gamma}(\delta,\tau)\right),
	\end{align}
	with $\bm{\gamma}(\delta,\tau)\!=\!\begin{bmatrix}
	\gamma_1(\delta,\tau)&\cdots&\gamma_{d_f}(\delta,\tau)
	\end{bmatrix}^T$. Since the uncertain derivative component 
	$\dot{V}_{\bm{\sigma}}(\cdot,\cdot)$ is non-negative, a positive nominal derivative component $\dot{V}_{\mathrm{nom}}(\cdot,\cdot)$
	directly implies that \eqref{eq:pess_stab} violates the Lyapunov stability conditions regardless of the GP posterior variance. 
	Therefore, we assume  
	$\dot{V}_{\mathrm{nom}}(\bm{x},t)\!<\! 0$ in the following,  essentially requiring that the control law can stabilize the dynamical system defined by $\bm{\mu}(\cdot)$.
	As a result, stability of the closed-loop system depends on the magnitude of the uncertain Lyapunov 
	function derivative $\dot{V}_{\bm{\sigma}}(\cdot,\cdot)$, which is strongly influenced by posterior 
	GP standard deviations $\tilde{\bm{\sigma}}(\cdot)$. Although this establishes a direct relationship 
	between the training data density and the control task, the dependency of $\tilde{\bm{\sigma}}(\cdot)$
	on training samples is highly nonlinear and the computation of $\tilde{\bm{\sigma}}(\cdot)$ is
	computationally expensive. In order to mitigate these issues, we introduce the weighted $M$-fill distances,
	in analogy to the $M$-fill distance proposed in \cite{Lederer2020a}.\looseness=-1
	\begin{definition}
		\label{def:filldist}
		The weighted $M$-fill distance $\phi_i(\bm{x},\mathbb{D}_N)$ for function $f_i(\cdot)$, $i=1\ldots,d_f$, at a point $\bm{x}$ is defined 
		as the minimum radius $\varphi$ of a ball with center $\tilde{\bm{z}}=[\bm{x}^T\ \bm{\pi}^T(\bm{x})]^T$, such that the ball contains $M$ 
		samples $\bm{z}^{(n)}$, i.e., 
		\begin{subequations}
			\begin{align}
			\!\tilde{\phi}_i(\bm{x},\mathbb{D}_N)=&\min\limits_{\phi\in\mathbb{R}_{+,0}}\varphi\\ 
			\text{such that} &\left| \left\{ \bm{z}^{(n)} \in \mathbb{D}_N: \left(\tilde{\bm{z}}-\bm{z}^{(n)}\right)^T\bm{\Lambda}_i^{-1}\left(\tilde{\bm{z}}-\bm{z}^{(n)}\right) \leq \varphi^2 \right\} \right|\geq M,\!
			\end{align}
		\end{subequations}
		where $|\cdot|$ denotes the cardinality of the set and we use the abbreviation $\bm{\pi}(\bm{x})\!=\!\bm{\pi}(\bm{x},\bm{x}^{(1)}\!,\bm{u}^{(1)}\!,\ldots,t)$.
	\end{definition}
	The weighted $M$-fill distances measure the distance from a test point $\tilde{\bm{z}}\!=\![\bm{x}^T\ \bm{\pi}^T(\bm{x})]^T$ to the $M$ closest  
	training samples in the Mahalonobis distance metric induced by the length scales $\bm{\Lambda}_i$  
	of the squared exponential kernels \eqref{eq:SE kernel}. 
	By choosing a small number $M\!\ll\! N$, only training points in the proximity of the test
	point $\tilde{\bm{z}}=[\bm{x}^T\ \bm{\pi}^T(\bm{x})]^T$ are relevant  
	for the weighted $M$-fill distance $\tilde{\phi}_i(\bm{x},\mathbb{D}_N)$. This allows us to measure the local 
	data density in a flexible way, where high training data densities are indicated by low values 
	of $\tilde{\phi}_i(\bm{x},\mathbb{D}_N)$. Moreover, it is possible to bound the posterior GP variances $\tilde{\bm{\sigma}}^2(\bm{x})$ in terms of the weighted $M$-fill distances $\tilde{\phi}_i(\bm{x},\mathbb{D}_N)$\footnote{
		A bound for the posterior variance in terms of the $M$-fill distance is derived in the appendix.}.
	We
	exploit this property in the following theorem to derive conditions that guarantee that the summands of the 
	uncertain Lyapunov derivative $\dot{V}_{\bm{\sigma}}(\cdot,\cdot)$ are upper bounded by functions 
	$\xi_{i}:\mathbb{R}^{d_z}\times\mathbb{R}_{0,+}\rightarrow\mathbb{R}_{0,+}$, $i=1,\ldots,d_f$. For suitably chosen functions
	$\xi_{i}(\cdot,\cdot)$, the satisfaction of these bounds implies stability of the closed-loop system.\looseness=-1
	\begin{theorem}
		\label{thm6}
		Choose $\tau$ such that $\sqrt{\beta(\delta,\tau)}\tilde{\sigma}_i(\bm{x})>\gamma_i(\delta,\tau)$ holds for all
		$\bm{x}\in\mathbb{X}$ and $\xi_i: \mathbb{R}^{d_z}\times\mathbb{R}_{0,+} \rightarrow \mathbb{R}_{0,+}$ such that\looseness=-1
		\begin{align}
		\label{eq:xi_cond}
		\dot{V}_{\sigma_{i,0}}^2(\bm{x},t)=4\beta(\delta,\tau)s_{f_i}^2 \left|\left(\nabla_{\bm{x}} V(\bm{x},t)\right)^T\!\bm{a}_{i}\right|^2>\xi_{i}^2(\bm{x},t).
		\end{align}
		If the $M$-fill distance $\tilde{\phi}_i(\cdot,\mathbb{D}_N)$ satisfies $\tilde{\phi}_{i}^2(\bm{x},\mathbb{D}_N)\leq \bar{\phi}_{i}^2(\bm{x},t)+\theta_i^2$ 
		for all $\bm{x}\in\mathbb{X}$, where
		\begin{align}
		\label{eq:phibar}
		\bar{\phi}_{i}^2(\bm{x},t)&=-\log\left( 1-\frac{\xi_{i}^2(\bm{x},t)}{\dot{V}_{\sigma_{i,0}}^2(\bm{x},t)} \right)\\
		\theta_i^2&=\log\left(s_{f_i}^2\|\bm{a}_i\|_2^2\right)-\log\left(\max\limits_{m=1,\ldots,d_x}\sum\limits_{n=1}^{{d_f}} |a_{m,n}| \|\bm{a}_n\|_1s_{f_n}^2+\frac{\lambda_{\max}(\Sigma_{\mathrm{on}})}{M} \right),
		\label{eq:theta}
		\end{align}
		then, with probability of at least $1-\delta$, it holds for all $\bm{x}\in\mathbb{X}$ that
		\begin{align}\label{eq:Vdot_sigmai}
		\dot{V}_{\sigma_i}(\bm{x},t)=\left(\sqrt{\beta(\delta,\tau)}\tilde{\sigma}_i(\bm{x})+\gamma_i(\delta,\tau)\right)\left|\left(\nabla_{\bm{x}} V(\bm{x},t)\right)^T\!\bm{a}_{i}\right|\leq \xi_{i}(\bm{x},t).
		\end{align}	
	\end{theorem}
	Condition \eqref{eq:xi_cond} is necessary to ensure the existence of the logarithm in \eqref{eq:phibar}, 
	but it is not restrictive since $\xi_{i}^2(\cdot,\cdot)$ are upper bounds. Hence, we can simply tighten 
	the bounds until $\xi_{i}^2(\cdot,\cdot)$ satisfies condition \eqref{eq:xi_cond}. The 
	expressions \eqref{eq:phibar} and 
	\eqref{eq:theta} have different roles. The values $\theta_i$ express the difficulty of recovering 
	the functions $f_i(\cdot)$ from the noisy measurements of $\bm{A}\bm{f}(\cdot)$, which in turn depends on the signal 
	variances $s_{f_i}^2$ of the independent covariance functions $k_i(\cdot,\cdot)$ and the magnitude of the
	elements of $\bm{A}$.  In contrast, $\bar{\phi}_{i}^2(\cdot,\cdot)$ captures the dependency on the 
	control task. The numerator in \eqref{eq:phibar} corresponds to a summand of the uncertain 
	Lyapunov derivative component \eqref{eq:uncert Lyap} without any training data as $s_{f_i}^2$ corresponds to the prior GP variance. Therefore, 
	$\bar{\phi}_{i}^2(\bm{x},t)$ goes to $\infty$ when the prior uncertain Lyapunov component converges
	to the bound $\xi_{i}^2(\bm{x},t)$. This intuitively reflects the fact that no data are required when the GP prior is already sufficient to guarantee \eqref{eq:Vdot_sigmai}.
	
	As mentioned previously, \cref{thm6} can be used to analyze the stability of the closed loop system.
	More specifically, stability is guaranteed if the negated nominal Lyapunov derivative $\dot{V}_{\mathrm{nom}}(\cdot,\cdot)$ is 
	larger than the uncertain component $\dot{V}_{\bm{\sigma}}(\cdot,\cdot)$. This can be trivially checked
	with \cref{thm6} by defining $\xi_{j}(\cdot,\cdot)$ such that 
	$\sum_{i=1}^{d_f}\xi_{i}(\bm{x},t)\leq |\dot{V}_{\mathrm{nom}}(\bm{x},t)|$. A natural 
	choice satisfying this condition together with constraint \eqref{eq:xi_cond} is given by%\pagebreak
	\begin{align}
	\label{eq:xi}
	\xi_{i}(\bm{x},t)=\min\left\{
	-\frac{\dot{V}_{\mathrm{nom}}(\bm{x},t)\|\bm{a}_i\|_1}{\sum_{j=1}^{{d_f}}\|\bm{a}_j\|_1},
	\dot{V}_{\sigma_{i,0}}(\bm{x},t)- \nu
	\right\},
	\end{align}
	where $\nu\in\mathbb{R}_+$ is an arbitrarily small constant. Furthermore, convergence rates can be examined in a similar way. For example, an exponential rate of convergence is achieved by guaranteeing that a requirement similar to \eqref{eq:xi} is satisfied, where $\dot{V}_{\mathrm{nom}}(\bm{x},t)$ is replaced by $\dot{V}_{\mathrm{nom}}(\bm{x},t) + {V}(\bm{x},t)$.
	
	Due to the intuitive interpretation of \cref{thm6}, we propose to use it as the basis for a 
	measure of the importance of training data for control. This naturally leads to the definition 
	of the $\rho$-gap. 
	\begin{definition}
		\label{def:rho_gap}
		The $\rho$-gap is defined as 
		\begin{align}
		\rho(\bm{x},t,\mathbb{D}_N)=\sum\limits_{i=1}^{d_f}\max\{0,\phi_{i}^2(\bm{x},\mathbb{D}_N)-\bar{\phi}_{i}^2(\bm{x},t)-\theta_i^2\}. 
		\end{align}
	\end{definition}
	Essentially, the $\rho$-gap measures the discrepancy between the required data density, 
	which is expressed through $\bar{\phi}_{i}^2(\bm{x},t)+\theta_i^2$ and depends on the 
	desired bounds $\xi_i(\cdot,\cdot)$, the Lyapunov derivative and the signal standard 
	deviations $s_{f_i}^2$, and the actual data density represented by the $M$-fill 
	distances $\phi_{i}^2(\bm{x},\mathbb{D}_N)$, which are independent of the control
	problem and only depend on the available data.\looseness=-1
	
	%%%%%%%%%%%%%%%%%%%%%%%%%%%%%%%%%%%%%%%%%%%%%%%%%%%%%%%%%%%%%%%%%%%%%%%%%%%%%%%%
	%%%%%%%%%%%%%%%%%%%%%%%%%%%%%%%%%%%%%%%%%%%%%%%%%%%%%%%%%%%%%%%%%%%%%%%%%%%%%%%%
	\section{Data Selection Strategies}
	\label{sec:data_selection}
	%%%%%%%%%%%%%%%%%%%%%%%%%%%%%%%%%%%%%%%%%%%%%%%%%%%%%%%%%%%%%%%%%%%%%%%%%%%%%%%%
	%%%%%%%%%%%%%%%%%%%%%%%%%%%%%%%%%%%%%%%%%%%%%%%%%%%%%%%%%%%%%%%%%%%%%%%%%%%%%%%%
	Based on the information measure proposed in 
	Section~\ref{sec:info_measure}, the data set can be preprocessed to contain only 
	the most relevant information for the given control task. This becomes 
	particularly important in scenarios where the prediction of the GP model must 
	be performed under tight real-time constraints. Even with a precomputation of 
	the matrix inverse in~\eqref{eq:muf} (which takes $\mathcal{O}(N^3)$), the~$N$ 
	kernel evaluations (for $k_m^T(\bm{Z},\bm{z})$) and the 
	corresponding multiplications ($\mathcal{O}(N)$ for the posterior mean, 
	$\mathcal{O}(N^2)$ for the posterior variance) must still be performed online. 
	In practice, this imposes an upper bound for the number of points that can be considered by the model.
	We formulate the resulting computational constraint independent of the hardware 
	and specific real-time limit as follows.\looseness=-1
	\begin{assumption}
		\label{as:maxN}
		The computational constraints allow a maximum of~$\Nmax$ data points to be 
		considered by the GP regression model~\eqref{eq:muf}.
	\end{assumption}
	For the non-trivial case $N>\Nmax$, this makes a selection of an active data 
	set~$\mathbb{D}_\Nmax\subset\mathbb{D}_N$ necessary.
	Such a data selection has been considered for general function learning 
	in~\citet{Krause2008}, and specifically for control tasks 
	in~\citet{Umlauft2020b}. But both employ entropy-based criteria, which 
	only aims to optimize the precision of the model but does not consider the 
	closed-loop control performance. Therefore, we utilize the 
	$\rho$-gap as the measure for the control performance of a data 
	set to find the optimal active data set\footnote{To simplify 
		notation, we 
		introduce the index set~$\mathbb{I} = \{1,\ldots,N\}$ and all possible 
		subsets 
		with size~$\Nmax$, denoted as~$\mathbb{P}^{\mathbb{I}}_{\Nmax}$. 
		$\mathbb{i}(i)$ denotes the~$i$-th entry of the index set~$\mathbb{i}$.} 
	\begin{align}
	\label{eq:opt_activeData}
	\mathbb{i}^* = \argmin\limits_{\mathbb{i} \in \mathbb{P}^{\mathbb{I}}_{\Nmax}}\max\limits_{ 
		\begin{subarray}{l}
		\bm{t} \in \mathbb{T}, 
		\bm{x}\in\mathbb{X} 
		\end{subarray}} 
	\rho\left(\bm{x},t,\mathbb{i}\right), 
	\end{align} 
	where $\mathbb{T} = \rinterval{0}{T}$ with initial time~$t_0 \in \mathbb{R}_{0,+}, 
	t_0 < \infty$ and (possibly infinite) final time~$T\in \mathbb{R}\cup \infty $. The 
	optimal active data set is then given by~$\mathbb{D}_\Nmax = 
	\left\{\bm{z}^{(\mathbb{i}(i))},\bm{y}^{(\mathbb{i}(i))}\right\}_{i=1}^\Nmax$.
	If the 
	desired trajectory has a wide spread or~$\Nmax$ is small, then the selected 
	subset might not lead to a satisfactory control performance. For such a case, 
	we can partition the task in $S\in \mathbb{N}$ time intervals 
	$\mathbb{T}_0 = 
	\rinterval{0}{t_1}$,~$\mathbb{T}_1 = \rinterval{t_1}{t_2}$,~$\ldots$, 
	$\mathbb{T}_S = \rinterval{t_S}{T}$
	and compute the corresponding optimal subsets 
	$\mathbb{D}_\Nmax^{\mathbb{T}_0}$, $\ldots$, $\mathbb{D}_\Nmax^{\mathbb{T}_S}$.\looseness=-1
	
	Due to its mixed nature (combinatorial in $\mathbb{i}$, continuous in~$t$ and~$\bm{x}$), 
	the optimization problem~\eqref{eq:opt_activeData} is not trivial to solve. 
	However, the optimization can be performed offline, assuming that 
	sufficient memory capacity 
	is available to store all precomputed subsets.
	Furthermore, in the field of function learning it has been shown that greedy 
	algorithms can show near-optimal behavior~\citep{Krause2008}.
	Therefore, we propose the greedy data selection procedure shown in 
	Algorithm~\ref{alg:data_selection}.
	\begin{algorithm2e}[t]
		\caption{Greedy optimization for optimal subset selection}
		\label{alg:data_selection}
		\SetAlgoLined
		\DontPrintSemicolon
		\KwIn{$\mathbb{D}_N$, 
			$\rho(\cdot,\cdot)$,~$\mathbb{T}_0,\ldots,\mathbb{T}_S$}
		\KwOut{$\mathbb{D}_\Nmax^{\mathbb{T}_0}$,$\ldots$,$\mathbb{D}_\Nmax^{\mathbb{T}_S}$}
		\For{$\mathbb{T}=\mathbb{T}_0,\ldots,\mathbb{T}_S$}{
			$\mathbb{D}_\Nmax^{\mathbb{T}} \gets \emptyset$, $\mathbb{I} = 
			\{1,\ldots, N\}$ \;
			\For{$n=0,\ldots,\Nmax$}{
				$i^*, t^* \gets\argmax\limits_{
					\begin{subarray}{l} 
					i\in \mathbb{I},
					t\in \mathbb{T}
					\end{subarray}}  
				\rho\left(\bm{x}^{(i)},t,\mathbb{D}_{\bar{N}}^{\mathbb{T}}\right)$ \; 
				$\mathbb{D}_\Nmax^{\mathbb{T}} \gets \mathbb{D}_\Nmax^{\mathbb{T}} 
				\cup \left\{\bm{z}^{(i^*)}, \bm{y}^{(i^*)}\right\}$ \;
				$\mathbb{I} \gets \mathbb {I} \setminus \{i^*\}$ \;
			}
			
		}
	\end{algorithm2e}

	\section{Numerical Evaluation}
	\label{sec:NumEval}

	In order to evaluate the proposed importance measure, 
	we consider the nonlinear system
	\begin{align}
	\dot{\bm{x}}=\bm{x}+\frac{1}{1+\exp(-2x_1)}\begin{bmatrix}
	1\\-1
	\end{bmatrix}+0.5\begin{bmatrix}
	\sin(\pi x_2)\\\cos(\pi x_1)
	\end{bmatrix}+\bm{u},
	\label{eq:sys_ex}
	\end{align}
	which is a slight modification of the example proposed in \citep{Umlauft2018}. We assume a prior model $\hat{\bm{f}}(\bm{z})=\bm{x}+\bm{u}$, and define the kernel matrix using 
	\begin{align}
	\bm{A}=\begin{bmatrix}
	1&0\\-1&1
	\end{bmatrix}
	\end{align}
	and squared exponential kernels $k_1(\bm{x},\bm{x}')$, $k_2(x_1,x_1')$. This ensures that the correlation
	between the outputs caused by the second summand in \eqref{eq:sys_ex} is properly modeled. We employ a 
	control law 
	\[
	\bm{\pi}(\bm{x},t)=-(\bm{\mu}(\bm{x})+K(\bm{x}-\bm{x}_{\mathrm{ref}}(t))-\dot{\bm{x}}_{\mathrm{ref}}(t)),
	\]
	with gain $K\!=\!15$ and references $\bm{x}_{\mathrm{ref}}(t)\!=\![c_1\sin(t)\ c_2\cos(t)]^T$ with  
	randomly drawn $c_i\!\sim\!\mathcal{N}(0,1)$.  
	Nominal stability of the closed loop is shown using the Lyapunov function 
	$V(\bm{x},t)\!=\!(\bm{x}\!-\!\bm{x}_{\mathrm{ref}}(t))^T(\bm{x}\!-\!\bm{x}_{\mathrm{ref}}(t)).$
	The training set is generated by simulating the closed-loop system with prior mean $\bm{\mu}(\cdot)\!=\!\hat{\bm{f}}(\cdot)$ and sampling $N\!=\!100$ data points during the interval $t\!\in\![0,T]$ with
	$T\!=\!10$. We divide the period of the reference trajectory into $S\!=\!10$ equally long 
	intervals $\mathbb{T}_s$ and select subsets of cardinality $\bar{N}\!=\!10$ for each interval using  
	\cref{alg:data_selection} with $M\!=\!1$ and $\xi_i(\cdot)$ as defined in \eqref{eq:xi}. \looseness=-1

\begin{figure}[t]
	\begin{center}
		\vspace{0.15cm} 
		\pgfplotsset{select coords between index/.style 2 args={
    x filter/.code={
        \ifnum\coordindex<#1\def\pgfmathresult{}\fi
        \ifnum\coordindex>#2\def\pgfmathresult{}\fi
    }
}}

\pgfdeclarelayer{background}
\pgfdeclarelayer{foreground}
\pgfsetlayers{background,main,foreground}

\begin{tikzpicture}
\begin{axis}[
name=plot1,
  ylabel={state $x_2$},
  legend pos=north west,
  width=0.35\columnwidth,
  height=0.35\columnwidth,
  ymin=-1.2,
  ymax=1.2,
  xmin=-1.3,
  xmax=1.3,
    font={\sffamily},
  legend style={font=\scriptsize\sffamily,at={(2.05,0.175)},anchor=south west,/tikz/every even column/.append style={column sep=0.1cm}},
  legend columns=1,
  legend cell align={left}]

\addplot[only marks,color=green!50!black,mark=*,forget plot] table [x index=10,y index=11]{data/traj_int5.txt};
\addplot[only marks,color=black,mark=+, on background layer,forget plot] table [x index=6,y index=7]{data/traj_int5.txt};
 
\node[anchor=west] at (axis cs: -1.35,0.87) {\footnotesize\sffamily $t=\frac{4\pi}{5}$};
 
\addlegendimage{color=red,line width=1pt};
\addlegendimage{color=red,dashed,line width=1pt};
\addlegendimage{color=blue, line width=1pt};
\addlegendimage{only marks, color=blue, line width=1pt,mark=o};
\addlegendimage{only marks,color=black,mark=+};
\addlegendimage{only marks,color=green!50!black,mark=*,};
\legend{reference, future reference, trajectory, current state, all data points, selected data}
\end{axis}
\begin{axis}[
 name=plot1,
 legend pos=north west,
 width=0.35\columnwidth,
 height=0.35\columnwidth,
 ymin=-1.2,
 ymax=1.2,
 xmin=-1.3,
 xmax=1.3,
 font={\sffamily},
 hide x axis,
 hide y axis]
 
 \addplot[color=red,line width=1pt,forget plot] table [x index=0,y index=1]{data/traj_int5.txt};
 \addplot[color=red,dashed,line width=1pt,forget plot] table [x index=8,y index=9]{data/traj_int5.txt};
 \addplot[color=blue, line width=1pt,forget plot] table [x index=4,y index=5]{data/traj_int5.txt};
 \addplot[only marks, color=blue, line width=1pt,mark=o,select coords between index={199}{199},forget plot] table [x index=4,y index=5]{data/traj_int5.txt};
 
\end{axis}
\begin{axis}[
  name=plot2,
  at=(plot1.east), anchor=west,
  xshift=0.1cm,
  xlabel={state $x_1$},xlabel style={at={(-0.0,-0.07)}},
  legend pos=north west,
  width=0.35\columnwidth,
  height=0.35\columnwidth,
  ymin=-1.2,
  ymax=1.2,
  xmin=-1.3,
  xmax=1.3,
    font={\sffamily},
  yticklabels={,,}]
\addplot[only marks,color=black,mark=+, on background layer] table [x index=6,y index=7]{data/traj_int9.txt};
 \addplot[only marks,color=green!50!black,mark=*] table [x index=10,y index=11]{data/traj_int9.txt};
 \node[anchor=west] at (axis cs: -1.35,0.87) {\footnotesize\sffamily $t=\frac{8\pi}{5}$};
 \end{axis}
 \begin{axis}[
 name=plot2,
 at=(plot1.east), anchor=west,
 xshift=0.1cm,
 legend pos=north west,
 width=0.35\columnwidth,
 height=0.35\columnwidth,
 ymin=-1.2,
 ymax=1.2,
 xmin=-1.3,
 xmax=1.3,
 font={\sffamily},
 yticklabels={,,},
 hide x axis,
 hide y axis]
 
 \addplot[color=red,line width=1pt] table [x index=0,y index=1]{data/traj_int9.txt};
 \addplot[color=red,dashed,line width=1pt] table [x index=8,y index=9]{data/traj_int9.txt};
 \addplot[color=blue, line width=1pt] table [x index=4,y index=5]{data/traj_int9.txt};
 \addplot[only marks, color=blue, line width=1pt,mark=o,select coords between index={199}{199}] table [x index=4,y index=5]{data/traj_int9.txt};
\end{axis}
\end{tikzpicture} 
		\vspace{-0.6cm}\caption{Snapshots of the desired and actual trajectory of the dynamical system. Data selected by \cref{alg:data_selection} lies 
			close to the future reference  and ensures a high tracking accuracy.}\vspace{-0.5cm} 
		\label{fig:time_var} 
		\vspace{-0.1cm} 
	\end{center}   
\end{figure}   
	
\begin{table}[b] 
	\vspace{-0.2cm} 
	\small
	\captionof{table}{Tracking errors and GP prediction times  
		resulting from different subset selection criteria. The $\rho$-gap significantly outperforms  
		existing methods regarding control performance.} 
	\label{tab:sum} 
	\centering
	\vspace{-0.3cm} 
	\begin{tabular}{lcccc} 
		\toprule
		criterion & full data set& mutual information & mutual information& $\rho$-gap\\ 
		&           & w.r.t. uniform grid& w.r.t. reference  &         \\ \midrule 
		steady-state MSE ($\cdot 10^{-3}$)& 1.15 & 1.32 & 0.38 & \textbf{0.16} \\ 
		prediction time ($\mu s$)         & 437  & \textbf{45.0} & \textbf{45.0} & \textbf{45.0} \\ \bottomrule   
	\end{tabular}
	\vspace{0.0cm} 
		
\end{table}

Snapshots of the selected subsets and the resulting system trajectories are illustrated in \cref{fig:time_var}.
It can be clearly seen that the training samples are chosen close to the reference in the considered
time intervals $\mathbb{T}_s$. This is because the feedback $K(\bm{x}\!-\!\bm{x}_{\mathrm{ref}})$
ensures that the Lyapunov stability condition is satisfied far away from the reference regardless of training data. Moreover, the data density grows as the distance to the reference decreases due 
to the vanishing effect of the feedback in its proximity. \looseness=-1

We evaluate our technique by carrying out $100$ control law roll-outs with randomly drawn  
trajectory parameters $c_i$. Furthermore, we compare the results with the performance of the full data set, as well as data selected with a greedy maximization of the mutual information with respect to a uniform grid over $[-1.5,1.5]^2$, and maximization of the mutual information with respect to the considered trajectory interval \citep{Umlauft2020b}. The results are depicted in \cref{tab:sum}. A reduction in the average computation time  
by a factor of approximately $10$ for all subset selection methods can observed, which is a straightforward  
consequence of the linear complexity  
of predictions \citep{Rasmussen2006}. Moreover, the steady state mean squared tracking error  
is the lowest for the subset selected based on the $\rho$-gap, and even smaller than for the full 
data set. The approach from  
\citep{Umlauft2020b} exhibits a similar performance. Although it might seem unintuitive that reducing the number of samples benefits control performance, this is not excluded by \cref{thm2}, as discussed in Remark~\ref{rem1}. This underlines the  
importance of selecting training data for learning-based control.  \looseness=-1  

\section{Conclusion}
\label{sec:Conc}
We presented the $\rho$-gap, a measure that quantifies the value of data for a broad class of control tasks. The proposed quantity is used to identify the optimal data set for control tasks under computational constraints. Simulations demonstrate that the data subsets selected using the presented measure
are highly correlated with the control task and can even be beneficial for control performance.\looseness=-1  
%\pagebreak
\acks{This work was supported by the European Research Council Consolidator Grant “Safe data-driven control for human-centric systems (CO-MAN)” under  grant  agreement  number  864686.  Armin  Lederer  gratefully  acknowledges financial support from the German Academic Scholarship Foundation.\looseness=-1}

\bibliography{myBib}

\appendix

\section{Component-wise Uniform Error Bounds for Multiple-Output Gaussian Process Regression}

\par\noindent{\bfseries\upshape Proof of Lemma~\ref{lem1}\ }
It can be easily checked that we can express the kernel matrix function as
\begin{align}
\bm{K}(\bm{z},\bm{z}')=\sum\limits_{i=1}^{d_f}\bm{B}_i k_i(\bm{z},\bm{z}'),
\label{eq:k sum}
\end{align}
where 
\begin{align}
\bm{B}_i=\begin{bmatrix}
a_{1,i}\\\vdots\\a_{d_x,i}
\end{bmatrix}\underbrace{\begin{bmatrix}
	a_{1,i}&\cdots&a_{d_x,i}
	\end{bmatrix}}_{\bm{a}_i^T}.
\end{align}
Therefore, the kernel matrix is linear in the scalar kernel functions $k_m(\cdot,\cdot)$, 
such that the posterior of $\bm{a}_i f_i(\cdot)$ can be obtained as
\begin{align}
	\bm{a}_i f_i(\cdot)|\mathbb{D}_N\sim \mathcal{N}(\mu_{\bm{a}_i f_i}(\cdot),\bm{\Sigma}_{\bm{a}_if_i}(\cdot)),
\end{align}
where
\begin{align}
	\mu_{\bm{a}_i f_i}(\bm{z})&=\left(\bm{B}_i\otimes k_i^T(\bm{Z},\bm{z})  \right)\left( 
	\bm{\Sigma}_{\mathrm{on}}\otimes\bm{I}_N+\sum\limits_{j=1}^{d_f}\bm{B}_j\otimes k_j(\bm{Z},\bm{Z}) \right)^{-1}\left(\bm{t}-\hat{\bm{f}}(\bm{Z})\right)\\
	\bm{\Sigma}_{\bm{a}_if_i}(\bm{z})&=k_i(\bm{z},\bm{z})\bm{B}_i\!-\!\left( \bm{B}_i\!\otimes\! k_i^T(\bm{Z},\bm{z}) \!\right) \!\! \left( \!\!
	\bm{\Sigma}_{\mathrm{on}}\!\otimes\!\bm{I}_N\!+\!\!\sum\limits_{j=1}^{d_f}\!\!\bm{B}_j\!\otimes\! k_j(\bm{Z},\bm{Z}) \!\!\right)^{\!\!\!-1}\!\!\!\!\!\left( \bm{B}_i \!\otimes\! k_i(\bm{Z},\bm{z})\!\right)
\end{align}
This follows from a trivial extension of the results in \citep{Duvenaud2014} to multiple-output
GPs. Due to the definition of $\bm{B}_i$, we can equivalently write
\begin{align}
	\mu_{\bm{a}_i f_i}(\bm{z})&=\bm{a}_i\left(\left(\bm{a}_i^T \otimes k_i^T(\bm{Z},\bm{z}) \right)\!\! \left( \!\!
	\bm{\Sigma}_{\mathrm{on}}\!\otimes\!\bm{I}_N\!+\!\!\sum\limits_{j=1}^{d_f}\!\!\bm{a}_j\bm{a}_j^T\!\otimes\! k_j(\bm{Z},\bm{Z}) \!\!\right)^{\!\!\!-1}\!\!\!\!\!\left(\bm{t}-\hat{\bm{f}}(\bm{Z})\right)\right)\\
	\bm{\Sigma}_{\bm{a}_if_i}(\bm{z})&=\bm{a}_ik_i(\bm{z},\bm{z})\bm{a}_m^T\\
	&-\bm{a}_i\!\!\left(\!\!\left(\bm{a}_i^T  \!\otimes\! k_i^T(\bm{Z},\bm{z})\right)  \!\! \left( \!\!
	\bm{\Sigma}_{\mathrm{on}}\!\otimes\!\bm{I}_N\!+\!\!\sum\limits_{j=1}^{d_f}\!\!\bm{a}_j\bm{a}_j^T\!\otimes\! k_j(\bm{Z},\bm{Z}) \!\!\right)^{\!\!\!-1}\!\!\!\!\!\left(\bm{a}_i  \!\otimes \! k_i(\bm{Z},\bm{z})\right)\!\!\right)\!\!\bm{a}_i^T,
\end{align}
from which we can directly deduce the identities \eqref{eq:muf} and \eqref{eq:sigmaf}.
{\jmlrQED}

\par\noindent{\bfseries\upshape Proof of Theorem~\ref{thm2}\ }
The result follows from Lemma~\ref{lem1} and a straightforward adaption of 
\cite[Theorem 3.1]{Lederer2019}.
{\jmlrQED}

\section{Variance Bounds and Lyapunov-Based Data Densities}

\begin{lemma}
	\label{lem5}
	The posterior variance $\tilde{\sigma}_i^2(\bm{x})$ defined in \eqref{eq:sigmaf} is bounded by
	\begin{align}
	\label{eq:varbound}
	\tilde{\sigma}_j^2(\bm{x})\leq s_{f_j}^2-\frac{s_{f_j}^4\exp(-\tilde{\phi}_j^2(\bm{x})) \sum\limits_{i=1}^{d_x}a_{i,j}^2}{\max\limits_{m=1,\ldots,d_x}\sum\limits_{n=1}^{{d_f}}\sum\limits_{i=1}^{d_x}a_{m,n}a_{i,n}s_{f_n}^2+\frac{\lambda_{\max}(\Sigma_{\mathrm{on}})}{M}}.
	\end{align}
\end{lemma}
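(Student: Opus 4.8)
The plan is to start from the closed-form posterior variance \eqref{eq:sigmaf} of \cref{lem1}, specialized to the squared exponential kernel \eqref{eq:SE kernel}, for which $k_j(\bm{z},\bm{z})=s_{f_j}^2$ at the closed-loop test input $\bm{z}=[\bm{x}^T\ \bm{\pi}^T(\bm{x})]^T$. Writing $\bm{v}=\bm{a}_j\otimes k_j(\bm{Z},\bm{z})$ and $\bm{M}=\bm{K}(\bm{Z},\bm{Z})+\bm{\Sigma}_{\mathrm{on}}\otimes\bm{I}_N$ (the Kronecker ordering is immaterial below), the variance reads $\tilde{\sigma}_j^2(\bm{x})=s_{f_j}^2-\bm{v}^T\bm{M}^{-1}\bm{v}$, so proving the upper bound \eqref{eq:varbound} is equivalent to establishing a lower bound on the quadratic form $\bm{v}^T\bm{M}^{-1}\bm{v}$.

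First I would reduce the full data set to the $M$ training points lying in the ball of \cref{def:filldist}, which has radius $\tilde{\phi}_j(\bm{x})$. Since conditioning a jointly Gaussian vector on additional observations never increases the posterior variance, the variance computed from the subset $\mathbb{D}_M\subset\mathbb{D}_N$ of these $M$ points upper bounds $\tilde{\sigma}_j^2(\bm{x})$; equivalently, $\bm{v}^T\bm{M}^{-1}\bm{v}\geq\bm{v}_M^T\bm{M}_M^{-1}\bm{v}_M$, where $\bm{M}_M,\bm{v}_M$ are the restrictions to these indices. I then bound $\bm{v}_M^T\bm{M}_M^{-1}\bm{v}_M\geq\|\bm{v}_M\|^2/\lambda_{\max}(\bm{M}_M)$. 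Using $\|\bm{p}\otimes\bm{q}\|^2=\|\bm{p}\|^2\|\bm{q}\|^2$ gives $\|\bm{v}_M\|^2=\big(\sum_{i=1}^{d_x}a_{i,j}^2\big)\sum_{n=1}^M k_j^2(\bm{z}^{(n)},\bm{z})$, and since each of the $M$ points satisfies the Mahalanobis bound $\sum_k (z_k-z_k^{(n)})^2/l_{j,k}^2\leq\tilde{\phi}_j^2(\bm{x})$, every squared kernel value is at least $s_{f_j}^4\exp(-\tilde{\phi}_j^2(\bm{x}))$, so $\|\bm{v}_M\|^2\geq M s_{f_j}^4\exp(-\tilde{\phi}_j^2(\bm{x}))\sum_{i=1}^{d_x}a_{i,j}^2$.

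It then remains to upper bound $\lambda_{\max}(\bm{M}_M)$. By Weyl subadditivity and $\lambda_{\max}(\bm{\Sigma}_{\mathrm{on}}\otimes\bm{I}_M)=\lambda_{\max}(\bm{\Sigma}_{\mathrm{on}})$, it suffices to bound the largest eigenvalue of the kernel block $\bm{C}=\sum_{n=1}^{d_f}\bm{a}_n\bm{a}_n^T\otimes k_n(\bm{Z}_M,\bm{Z}_M)$ (cf. \eqref{eq:k sum}). For this symmetric matrix I would use the row-sum ($\infty$-norm) bound on the spectral radius: the entry indexed by output pair $(m,i)$ and samples $(p,q)$ is $\sum_n a_{m,n}a_{i,n}k_n(\bm{z}^{(p)},\bm{z}^{(q)})$, so the triangle inequality with $0\leq k_n\leq s_{f_n}^2$ and $\sum_{q=1}^M k_n(\bm{z}^{(p)},\bm{z}^{(q)})\leq M s_{f_n}^2$ bounds every row sum by $M\sum_{n}\sum_i|a_{m,n}a_{i,n}|s_{f_n}^2$, hence $\lambda_{\max}(\bm{C})\leq M\max_m\sum_{n}\sum_i|a_{m,n}a_{i,n}|s_{f_n}^2$. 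Substituting both estimates into $\tilde{\sigma}_j^2(\bm{x})\leq s_{f_j}^2-\|\bm{v}_M\|^2/\lambda_{\max}(\bm{M}_M)$ makes the factor $M$ cancel between numerator and denominator, turning the additive noise term into $\lambda_{\max}(\bm{\Sigma}_{\mathrm{on}})/M$ and yielding exactly \eqref{eq:varbound}.

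The hard part will be the monotonicity reduction to the $M$ nearby points: the lower bound on $\|\bm{v}\|^2$ only exploits those $M$ samples, so a direct estimate over all $N$ points would leave $N$ rather than $M$ in the row sums and the clean cancellation would fail. Justifying that we may discard all but the $M$ points inside the fill ball before applying the eigenvalue estimates — and verifying that this variance-monotonicity argument is valid for the \emph{decoupled} scalar variance of $f_j$, which is a posterior under observations of $\bm{A}\bm{f}+\bm{\epsilon}$ rather than of $f_j$ directly — is the step that requires the most care; the remaining eigenvalue and kernel estimates are routine.
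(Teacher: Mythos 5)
Your proposal is correct and follows essentially the same route as the paper's proof: restrict to the $M$ samples inside the fill ball via posterior-variance monotonicity, lower-bound the quadratic form by $\|\bm{v}_M\|^2/\lambda_{\max}(\bm{M}_M)$ with the Kronecker norm identity and the kernel decay over the ball, and upper-bound $\lambda_{\max}$ by a Gershgorin/row-sum estimate so that the factor $M$ cancels. Your write-up is in fact more explicit than the paper's (which states the quadratic-form and norm bounds without derivation), and the caveats you flag about the monotonicity step for the decoupled variance are exactly the points the paper glosses over.
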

\begin{proof}
	This result is a direct extension of \cite[Corollary 3.1]{Lederer2019a} to multiple-output GPs with 
	linear coregionalization and we pursue the proof analogously. Since the posterior variance
	is non-increasing, we can consider only training samples $\bm{z}^{(n)}$ within distance at most 
	$\tilde{\phi}_j(\bm{x})$ to $\begin{bmatrix}
	\bm{x}^T&\bm{\pi}^T(\bm{x})
	\end{bmatrix}^T$ in the posterior variance calculation. Therefore, we obtain
	\begin{align}
	\label{eq:fillbound}
		\sigma_j^2(\bm{x})\leq s_{f_j}^2-\frac{\| k_j^T(\bm{Z}_{\tilde{\phi}_j(\bm{x})},\bm{z})  \otimes\bm{a}_j^T \|^2}{\lambda_{\max}\left( \!\!
			\bm{\Sigma}_{\mathrm{on}}\!\otimes\!\bm{I}_N\!+\!\!\sum\limits_{n=1}^{d_f}\!\!\bm{a}_n\bm{a}_n^T\!\otimes\! k_n(\bm{Z}_{\tilde{\phi}_j(\bm{x})},\bm{Z}_{\tilde{\phi}_j(\bm{x})}) \!\!\right)},
	\end{align}
	where $\bm{Z}_{\tilde{\phi}_j(\bm{x})}$ denotes the training samples with distance at most 
	$\tilde{\phi}_j(\bm{x})$ to $\begin{bmatrix}
	\bm{x}^T&\bm{\pi}^T(\bm{x})
	\end{bmatrix}^T$.
	We trivially obtain the bound 
	\begin{align}
	\label{eq:normk}
		\| k_j(\bm{Z}_{\tilde{\phi}_j(\bm{x})},\bm{z})  \otimes\bm{a}_j \|\geq M s_{f_j}^4\exp(-\tilde{\phi}_j^2(\bm{x})) \sum\limits_{i=1}^{d_x}a_{i,j}^2
	\end{align}
	due to the distance restriction. Moreover, the application of Gershgorin's theorem yields	
	\begin{align}
	\label{eq:eigmax}
		\lambda_{\max}\!\!\left( \!\!
		\bm{\Sigma}_{\mathrm{on}}\!\otimes\!\bm{I}_N\!+\!\!\sum\limits_{n=1}^{d_f}\!\!\bm{a}_n\bm{a}_n^T\!\otimes\! k_n(\bm{Z}_{\tilde{\phi}_j(\bm{x})},\bm{Z}_{\tilde{\phi}_j(\bm{x})}) \!\!\right)\leq \lambda_{\max}(\bm{\Sigma}_{\mathrm{on}})\!+\!\!
		\max\limits_{m=1,\ldots,d_x}\sum\limits_{n=1}^{{d_f}}\sum\limits_{i=1}^{d_x}a_{m,n}a_{i,n}M s_{f_n}^2
	\end{align}
	due to the definition of the $M$-fill distance $\tilde{\phi}_j(\bm{x})$ in \cref{def:filldist}. Substituting the bounds \eqref{eq:normk} and \eqref{eq:eigmax} in \eqref{eq:fillbound} finally yields 
	the result.
\end{proof}

\par\noindent{\bfseries\upshape Proof of Theorem~\ref{thm6}\ }
Since $\sqrt{\beta(\tau)}\tilde{\sigma}_j(\bm{x})>\gamma_j(\tau)$ by assumption, we can simplify
\begin{align}
	\sqrt{\beta(\tau)}\tilde{\sigma}_j(\bm{x})+\gamma_j(\tau)\leq 2\sqrt{\beta(\tau)}\tilde{\sigma}_j(\bm{x}).
\end{align}
Therefore, satisfaction of the condition
\begin{align}
	4\left(\sum\limits_{i=1}^{d_x} |a_{i,j}\frac{\partial}{\partial x_i}V(\bm{x},t)|\right)^2\beta(\tau)\tilde{\sigma}_j^2(\bm{x})\leq \xi^2_{j}(\bm{x},t)
\end{align}
implies the statement of \cref{thm6}. Hence, we can substitute \eqref{eq:varbound} and solve 
for $\tilde{\phi}_j^2(\bm{x})$ in order to prove \cref{thm6}.
{\jmlrQED}

\end{document}